\newtheorem{theorem}{Theorem}
\newtheorem{lemma}[theorem]{Lemma}
\theoremstyle{definition}
\title{\bf Extending the Characterization of Maximum Nash Welfare}
\author{Sheung Man Yuen and Warut Suksompong}
\affil{National University of Singapore}
\date{\vspace{-10mm}}
\begin{document}

\maketitle

\begin{abstract}
In the allocation of indivisible goods, the maximum Nash welfare rule has recently been characterized as the only rule within the class of additive welfarist rules that satisfies envy-freeness up to one good.
We extend this characterization to the class of all welfarist rules.
\end{abstract}

\section{Introduction} \label{sec:intro}

The fair allocation of indivisible goods---be it artwork, furniture, school supplies, or electronic devices---is a ubiquitous problem in society and has attracted significant interest in economics 
\citep{Moulin19}.
Among the plethora of methods that one may use to allocate indivisible goods fairly, the method that has arguably received the most attention in recent years is the \emph{maximum Nash welfare (MNW)} rule.
For instance, MNW is used to allocate goods on the popular fair division website Spliddit,\footnote{http://www.spliddit.org} which has served hundreds of thousands of users since its launch in 2014.

MNW selects from each profile an allocation that maximizes the product of the agents' utilities, or equivalently, the sum of their logarithms.
In an influential work, \citet{CaragiannisKuMo19} showed that every allocation output by MNW satisfies \emph{envy-freeness up to one good (EF1)}: given any two agents, if the first agent envies the second agent, then this envy can be eliminated by removing some good in the second agent's bundle.
Recently, \citet{Suksompong23} provided the first characterization of MNW by showing that it is the unique \emph{additive welfarist rule} that guarantees EF1---an additive welfarist rule selects an allocation maximizing a welfare notion that can be expressed as the sum of some function of the agents' utilities.
Suksompong's characterization raises an obvious question: Is MNW also the unique (not necessarily additive) \emph{welfarist rule} that guarantees EF1, where a welfarist rule selects an allocation maximizing a welfare notion that can be expressed as some function of the agents' utilities?

In this note, we answer the above question in the affirmative, by extending the characterization of \citet{Suksompong23} to the class of \emph{all} welfarist rules (whether additive or not).
This further solidifies the ``unreasonable fairness'' of MNW established by \citet{CaragiannisKuMo19}.

\section{Preliminaries} \label{sec:prelim}

Let $N = \{1, \ldots, n\}$ be the set of agents, and $G = \{g_1, \ldots, g_m\}$ be the set of goods.
Each agent $i \in N$ has a \emph{utility function} $u_i : 2^G \to \mathbb{R}_{\geq 0}$; for simplicity, we write $u_i(g)$ instead of $u_i (\{g\})$ for a single good $g \in G$. 
We assume that the utility functions are additive, that is, $u_i (G') = \sum_{g \in G'} u_i (g)$ for all $i \in N$ and $G' \subseteq G$. 
A \emph{profile} consists of $N$, $G$ and $(u_i)_{i \in N}$. 

An \emph{allocation} $A = (A_1, \ldots, A_n)$ is an ordered partition of $G$ into $n$ bundles such that bundle $A_i$ is allocated to agent $i$.
An agent $i\in N$ receives utility $u_i(A_i)$ from allocation~$A$.
An allocation~$A$ is \emph{EF1} if for every pair $i, j \in N$ such that $A_j \neq \emptyset$, there exists a good $g \in A_j$ with the property that $u_i (A_i) \geq u_i (A_j \setminus \{g\})$. 
A \emph{rule} maps any given profile to an allocation. 

Given $n \geq 2$, a \emph{welfare function} is a non-decreasing function $f_n : [0, \infty)^n \to [-\infty, \infty)$.
The \emph{welfarist rule with (welfare) function $f_n$} chooses from each profile an allocation $A$ that maximizes the \emph{welfare} $f_n(u_1 (A_1), \ldots, u_n (A_n))$; if there are multiple such allocations, the rule may choose one arbitrarily.

\section{The Result} \label{sec:result}

Before proceeding to our characterization, we first establish a technical lemma.

\begin{lemma} \label{lem:function}
Fix $n \geq 2$. Let $f_n : [0, \infty)^n \to [-\infty, \infty)$ be a function that is continuous on $(0, \infty)^n$.
Suppose that 
\begin{align} \label{eq:k_equal}
f_n((k+1)x_1, x_2, \ldots, x_{i-1}, kx_i, x_{i+1}, \ldots, x_n) = f_n(kx_1, x_2, \ldots, x_{i-1}, (k+1)x_i, x_{i+1}, \ldots, x_n)
\end{align}
for all $x_1, \ldots, x_n > 0$, positive integers $k$, and $i \in N \setminus \{1\}$. 
Then, there exists a continuous function $q: (0, \infty) \to [-\infty, \infty)$ such that $f_n(x_1, x_2, \ldots, x_n) = q(x_1 x_2 \cdots x_n)$ for all $x_1, \ldots, x_n > 0$.
\end{lemma}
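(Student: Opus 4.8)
The plan is to pass to logarithmic coordinates, which turns the multiplicative functional equation \eqref{eq:k_equal} into a translation invariance, and then to combine a density argument with the continuity of $f_n$. Concretely, I would define $g : \mathbb{R}^n \to [-\infty, \infty)$ by $g(y_1, \ldots, y_n) = f_n(e^{y_1}, \ldots, e^{y_n})$; since $f_n$ is continuous on $(0,\infty)^n$ and the coordinatewise exponential is a homeomorphism from $\mathbb{R}^n$ onto $(0,\infty)^n$, the function $g$ is continuous on all of $\mathbb{R}^n$. Writing $x_j = e^{y_j}$ and letting $e_1, \ldots, e_n$ denote the standard basis, multiplying $x_1$ by $k+1$ and $x_i$ by $k$ corresponds to the translation $y \mapsto y + \log(k+1)\,e_1 + \log k\,e_i$, so \eqref{eq:k_equal} reads $g(y + a_{k,i}) = g(y + b_{k,i})$ with $a_{k,i} = \log(k+1)\,e_1 + \log k\,e_i$ and $b_{k,i} = \log k\,e_1 + \log(k+1)\,e_i$. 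Substituting $y \mapsto y - b_{k,i}$, this says that $g$ is invariant under translation by $a_{k,i} - b_{k,i} = \log\tfrac{k+1}{k}\,(e_1 - e_i)$, for every positive integer $k$ and every $i \in \{2, \ldots, n\}$.

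Next I would fix $i \in \{2, \ldots, n\}$ and consider the additive subgroup $D \subseteq \mathbb{R}$ generated by $\{\log\tfrac{k+1}{k} : k \geq 1\}$. The decisive observation is that $\log\tfrac{k+1}{k} \to 0^+$ as $k \to \infty$, so $D$ contains arbitrarily small positive elements; a subgroup of $(\mathbb{R}, +)$ with this property is dense. Hence $g$ is invariant under translation by $d\,(e_1 - e_i)$ for every $d$ in the dense set $D$, and for fixed $y$ the map $t \mapsto g(y + t(e_1 - e_i))$ is continuous and $D$-periodic, hence constant, giving $g(y + t(e_1 - e_i)) = g(y)$ for all $t \in \mathbb{R}$. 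I expect this to be the crux of the argument: the accumulation of the increments $\log\tfrac{k+1}{k}$ at $0$ is exactly what upgrades the discrete family of identities in \eqref{eq:k_equal} into genuine continuous invariance along each direction $e_1 - e_i$.

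Finally, the vectors $e_1 - e_2, \ldots, e_1 - e_n$ are linearly independent and all lie in the hyperplane $H = \{w \in \mathbb{R}^n : \sum_j w_j = 0\}$, so they form a basis of $H$. Combining the invariances over all $i$ yields $g(y + w) = g(y)$ for every $w \in H$, which means that $g(y)$ depends only on $\sum_j y_j$; writing $g(y) = \tilde q\!\left(\sum_j y_j\right)$ and using $\sum_j y_j = \log(x_1 \cdots x_n)$, I set $q(p) = \tilde q(\log p) = f_n(p, 1, \ldots, 1)$, so that $f_n(x_1, \ldots, x_n) = q(x_1 \cdots x_n)$ for all $x_1, \ldots, x_n > 0$. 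Continuity of $q$ follows from that of $f_n$ because $p \mapsto (p, 1, \ldots, 1)$ is continuous into $(0,\infty)^n$. Throughout, the extended codomain $[-\infty, \infty)$ poses no obstacle, since the density-plus-continuity step only invokes that a constant sequence has constant limit.
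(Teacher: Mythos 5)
Your proof is correct, and its crux coincides with the paper's: both upgrade the discrete identities \eqref{eq:k_equal} to full invariance along each two-dimensional ``slice'' by combining continuity with a density argument. The paper does this multiplicatively---it fixes $z = x_1 x_i$, shows that $h(x) := f_n(x, z/x, x_3, \ldots, x_n)$ satisfies $h\bigl(\tfrac{k+1}{k}x\bigr) = h(x)$, telescopes to obtain $h(rx) = h(x)$ for every positive rational $r$, and concludes that $h$ is constant from the density of the positive rationals; your version of this step is the observation that the subgroup generated by $\{\log\tfrac{k+1}{k}\}$ is dense in $\mathbb{R}$, which is the same fact in logarithmic coordinates. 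Where you genuinely diverge is in assembling the $n-1$ one-parameter invariances into the final conclusion: the paper runs a backward induction, constructing intermediate functions $q_k$ with $f_n(x_1,\ldots,x_n) = q_k(x_1 x_{k+1}\cdots x_n, x_2, \ldots, x_k)$ and collapsing one coordinate at a time via its identity \eqref{eq:constant}, whereas you pass to $g(y) = f_n(e^{y_1},\ldots,e^{y_n})$ and note that $e_1 - e_2, \ldots, e_1 - e_n$ span the sum-zero hyperplane, so $g$ is invariant under a codimension-one group of translations and hence depends only on $\sum_j y_j$. Your linear-algebraic finish is cleaner and makes the structure transparent, and it yields the explicit formula $q(p) = f_n(p,1,\ldots,1)$ directly; the paper's induction stays in the original multiplicative coordinates at the cost of bookkeeping the intermediate $q_k$'s. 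Both treatments handle the extended codomain $[-\infty,\infty)$ correctly, since the density-plus-continuity step only requires the codomain to be Hausdorff.
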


\begin{proof}
Suppose that $f_n$ fulfills assumption \eqref{eq:k_equal}.
First, we show that $f_n$ satisfies
\begin{align} \label{eq:constant}
    f_n(x, x_2, \ldots, x_{i-1}, z/x, x_{i+1}, \ldots, x_n) = f_n(y, x_2, \ldots, x_{i-1}, z/y, x_{i+1}, \ldots, x_n)
\end{align}
for all $i \in N \setminus \{1\}$ and $x_2, \ldots, x_{i-1}, x_{i+1}, \ldots, x_n, x, y, z > 0$. 
Assume without loss of generality that $i = 2$; the proof for any other $i\in\{3,\ldots,n\}$ is analogous. 
Let $x_3, \ldots, x_n, z > 0$ be fixed throughout. 
Define $h : (0, \infty) \to [-\infty, \infty)$ by $h(x) := f_n(x, z/x, x_3, \ldots, x_n)$ for all $x > 0$. 
Note that $h$ is continuous due to the continuity of $f_n$ on $(0, \infty)^n$. 
For any positive integer $k$ and any $x > 0$, we have
\begin{align*}
    h \left( \frac{k+1}{k} \cdot x \right) &= f_n \left( (k+1) \cdot \frac{x}{k},\, k \cdot \frac{z}{(k+1)x},\, x_3, \ldots, x_n \right) \\
    &= f_n \left( k \cdot \frac{x}{k},\, (k+1) \cdot \frac{z}{(k+1)x},\, x_3, \ldots, x_n \right) \tag*{(by \eqref{eq:k_equal})} \\
    &= f_n (x, z/x, x_3, \ldots, x_n) \\
    &= h(x),
\end{align*}
so for any rational number $r = a/b > 1$, we have
\begin{align*}
    h(rx) 
    = h\left( \frac{a}{a-1} \cdot \frac{a-1}{a-2} \cdot \; \cdots \; \cdot \frac{b+1}{b} \cdot x \right) 
    = h\left( \frac{a-1}{a-2} \cdot \; \cdots \; \cdot \frac{b+1}{b} \cdot x \right) 
    = \cdots 
    = h(x).
\end{align*}
Similarly, we have $h(rx) = h(x)$ for any rational number $0 < r < 1$, hence the same equation is true for all positive rational numbers $r$. Since $h$ is continuous and the positive rational numbers are dense in $(0, \infty)$, we can conclude that $h$ is constant, and thus, $f_n(x, z/x, x_3, \ldots, x_n) = f_n(y, z/y, x_3, \ldots, x_n)$ for all $x, y > 0$. Hence, \eqref{eq:constant} is true for all $i \in N \setminus \{1\}$ and $x_2, \ldots, x_{i-1}, x_{i+1}, \ldots, x_n, x, y, z > 0$.

Next, we prove by backward induction that for all integers $1 \leq k \leq n$, there exists a continuous function $q_k : (0, \infty)^k \to [-\infty, \infty)$ such that $f_n(x_1, \ldots, x_n) = q_k(x_1 x_{k+1} \cdots x_n, x_2, \ldots, x_k)$ for all $x_1, \ldots, x_n > 0$. Then, $q := q_1$ gives the desired conclusion.

For the base case $k = n$, we have $q_n := f_n|_{(0, \infty)^n}$. 
For the inductive step, let $2 \leq k \leq n$ be given, and assume that such a function $q_k$ exists; we shall prove that $q_{k-1}$ exists as well. 
Define $q_{k-1}$ by $q_{k-1}(y_1, \ldots, y_{k-1}) := q_k(y_1, \ldots, y_{k-1}, 1)$ for all $y_1, \ldots, y_{k-1} > 0$. Note that $q_{k-1}$ is continuous on $(0, \infty)^{k-1}$ due to the continuity of $q_k$ on $(0, \infty)^k$. Let $x_1, \ldots, x_n > 0$ be given. Then, by setting $x := x_1$ and $y := z := x_1 x_k$, we have
\begin{align*}
    f_n(x_1, \ldots, x_n) &= f_n(x, x_2, \ldots, x_{k-1}, z/x, x_{k+1}, \ldots, x_n) \\
    &= f_n(y, x_2, \ldots, x_{k-1}, z/y, x_{k+1}, \ldots, x_n) \tag*{(by \eqref{eq:constant})} \\
    &= f_n(x_1 x_k, x_2, \ldots, x_{k-1}, 1, x_{k+1}, \ldots, x_n) \\
    &= q_k(x_1 x_k x_{k+1} \cdots x_n, x_2, \ldots, x_{k-1}, 1) \tag*{(by the inductive hypothesis)} \\
    &= q_{k-1}(x_1 x_k \cdots x_n, x_2, \ldots, x_{k-1}),
\end{align*}
establishing the inductive step and therefore the lemma.
\end{proof}

We now state our characterization.
Recall from \Cref{sec:prelim} that a welfare function is assumed to be non-decreasing on $[0,\infty)^n$.

\begin{theorem} 
\label{thm:main}
Fix $n \geq 2$. Let $f_n$ be a welfare function that is continuous\footnote{In the prior characterization of \emph{additive} welfarist rules, \citet{Suksompong23} made the stronger assumption that the welfare function is differentiable. Here, we only assume that the function is continuous.} and strictly increasing\footnote{The theorem does not hold without the assumption that $f_n$ is strictly increasing on $(0,\infty)^n$: for example, if $f_n$ is a constant function, then statement (b) holds but (a) does not.} on $(0, \infty)^n$. Then, the following three statements are equivalent:
\begin{enumerate}[label=(\alph*)]
    \item For every profile that admits an allocation where every agent receives positive utility, every allocation that can be chosen by the welfarist rule with function $f_n$ is EF1.
    \item For every profile that admits an allocation where every agent receives positive utility, there exists an EF1 allocation that can be chosen by the welfarist rule with function $f_n$.
    \item The following two statements hold for $f_n$:
    \begin{enumerate}[label=(\roman*)]
        \item There exists a strictly increasing and continuous function $q : (0, \infty) \to (-\infty, \infty)$ such that $f_n(x_1, x_2, \ldots, x_n) = q(x_1 x_2 \cdots x_n)$ for all $x_1, \ldots, x_n > 0$.
        \item The inequality $f_n(x_1, x_2, \ldots, x_n) > f_n(y_1, y_2, \ldots, y_n)$ holds for all $x_1, \ldots, x_n > 0$ and $y_1, \ldots, y_n \geq 0$ satisfying $\prod_{i=1}^n y_i = 0$.
    \end{enumerate}
\end{enumerate}
\end{theorem}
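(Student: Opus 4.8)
The plan is to prove the equivalence by establishing the cycle (a) $\Rightarrow$ (b) $\Rightarrow$ (c) $\Rightarrow$ (a). The implication (a) $\Rightarrow$ (b) is immediate: there are only finitely many allocations, so an $f_n$-maximizer always exists, and if every chosen allocation is EF1 then in particular some chosen allocation is. For (c) $\Rightarrow$ (a), I would argue that under (c) the welfarist rule coincides with maximum Nash welfare on every profile admitting a positive allocation. Indeed, by (ii) any maximizer must give all agents positive utility, since the available positive allocation strictly dominates every allocation with a zero coordinate; and by (i) together with the strict monotonicity of $q$, maximizing $f_n = q(x_1\cdots x_n)$ over the positive allocations is the same as maximizing the product $\prod_i u_i(A_i)$. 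Hence every chosen allocation is an MNW allocation, and EF1 follows from \citet{CaragiannisKuMo19}. The substance of the theorem is therefore the implication (b) $\Rightarrow$ (c).

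The engine for (b) $\Rightarrow$ (c) is a family of gadget profiles. Fix $x_1,\dots,x_n>0$, an index $i\in N\setminus\{1\}$ (say $i=2$), and a positive integer $k$. I would give each agent $j\in\{3,\dots,n\}$ a single private good worth $x_j$ to agent $j$ and $0$ to everyone else, and introduce $2k+1$ identical ``common'' goods, each worth $x_1$ to agent~$1$ and $x_2$ to agent~$2$. The private goods render agents $3,\dots,n$ inert: they value only their own good, so they neither envy nor are envied, and by strict monotonicity every $f_n$-maximizer assigns each inert agent exactly its own good and places all common goods with agents~$1$ and~$2$. Writing $s$ for the number of common goods given to agent~$1$, checking the two envy inequalities between agents~$1$ and~$2$ shows that an allocation is EF1 if and only if $s\in\{k,k+1\}$, that is, exactly when it realizes the utility profile $A=((k+1)x_1,kx_2,x_3,\dots,x_n)$ or $A'=(kx_1,(k+1)x_2,x_3,\dots,x_n)$. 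Since the profile admits a positive allocation, (b) forces the global $f_n$-maximizer to be EF1, hence to be one of these two splits, giving $\max\{f_n(A),f_n(A')\}\ge f_n(P_s)$ for every split $P_s$. Taking $s\in\{0,2k+1\}$ shows that a balanced positive profile weakly dominates a profile with a zero coordinate, which—after shrinking the common product and invoking monotonicity and continuity together with the product form derived below—yields statement (ii).

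The main obstacle is that this symmetric gadget delivers only the inequality $\max\{f_n(A),f_n(A')\}\ge f_n(P_s)$, whereas Lemma~\ref{lem:function} requires the exact equality~\eqref{eq:k_equal}, namely $f_n(A)=f_n(A')$. A purely symmetric gadget cannot separate the two balanced allocations, since both are EF1 no matter which has higher welfare, so I would break the symmetry with an auxiliary gadget that leaves the realized utility profiles untouched but alters the EF1 status of one of them. The idea is to replace the common goods by $k$ goods worth $(x_1,0)$ to agents $(1,2)$ that are locked to agent~$1$, by $k$ goods worth $(x_1+\rho,x_2)$ for a small $\rho>0$, and by one swing good worth $(x_1,x_2)$: sending the swing good to agent~$1$ realizes profile $A$ and keeps it EF1, whereas sending it to agent~$2$ realizes profile $A'$ \emph{exactly} (the inflated goods are valued normally by agent~$2$, so no utilities are perturbed) while making $A'$ fail EF1, as agent~$1$ now envies agent~$2$ beyond one good. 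Combined with (b) this is meant to force $f_n(A)\ge f_n(A')$, and the mirror construction, inflating agent~$2$'s valuations, forces $f_n(A')\ge f_n(A)$; letting $\rho\to0$ with the continuity of $f_n$ then gives~\eqref{eq:k_equal}. I expect the delicate part to be precisely here: pinning down the welfarist maximizer in the asymmetric gadget (ensuring that EF1-optimality forces the welfare to equal the intended balanced value rather than that of some other EF1 allocation), and confirming that the inert and locked goods never create or destroy EF1 violations.

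Finally, once~\eqref{eq:k_equal} holds for all $x_1,\dots,x_n>0$, all admissible $k$, and all $i\in N\setminus\{1\}$ (noting that the ratios $(k+1)/k$ with $k\ge2$ already generate all positive rationals, so small exceptional values of $k$ are not an obstruction to Lemma~\ref{lem:function}), the lemma produces a continuous $q$ with $f_n(x_1,\dots,x_n)=q(x_1\cdots x_n)$ on $(0,\infty)^n$. The strict monotonicity of $q$ follows by restricting $f_n$ to a ray along which the product strictly increases, establishing (i), and together with (ii) this completes (c) and closes the cycle.
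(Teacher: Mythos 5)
Your skeleton matches the paper's ((a) $\Rightarrow$ (b) trivially, (c) $\Rightarrow$ (a) via the equivalence with MNW and the result of \citet{CaragiannisKuMo19}, and (b) $\Rightarrow$ (c) by constructing profiles that force \eqref{eq:k_equal} and then invoking \Cref{lem:function}), and you correctly identify that the crux is upgrading the one-sided inequality from a symmetric gadget to the exact equality \eqref{eq:k_equal}. However, the asymmetric gadget you propose does not close this step, and it fails exactly at the spot you flag as ``delicate.'' Because your $k$ locked, $k$ inflated, and one swing good are not identical and agents $3,\ldots,n$ are inert, EF1 between agents $1$ and $2$ admits many utility profiles besides the two intended ones: for instance, giving agent $1$ the locked goods, $j \geq 1$ inflated goods, and the swing good is EF1 whenever $j \leq k/2$, realizing $((k+j+1)x_1 + j\rho,\ (k-j)x_2, \ldots)$, and further extra profiles arise with the swing good at agent $2$. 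Statement (b) only guarantees that \emph{some} EF1 allocation is a maximizer; if that maximizer realizes one of these extra profiles $P$, you obtain $f_n(P) \geq f_n(A')$ and no contradiction with the assumption $f_n(A) < f_n(A')$. (Separately, for $k=1$ the profile $A'$ remains EF1 in your gadget---removing the single inflated good from agent $2$'s bundle already eliminates agent $1$'s envy---so no asymmetry is created; your observation that $k \geq 2$ suffices to feed \Cref{lem:function} is correct, but the $k \geq 2$ case itself is the one left unestablished.)

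The paper avoids this by keeping all $kn$ common goods identical and valued \emph{positively by every agent} (agent $j \geq 3$ values each at $x_j/k$), so a pigeonhole argument forces every agent to receive exactly $k$ of them in any EF1 allocation; the asymmetry is injected not by perturbing valuations of shared goods but by one extra good worth $x_1 - \epsilon$ to agent $1$ alone, with $\epsilon$ supplied by continuity of $f_n$ so that the strict inequality survives the perturbation. This pins the EF1 maximizer's utility profile down to the single point $((k+1)x_1 - \epsilon, kx_2, x_3, \ldots, x_n)$, and comparing it against the realizable profile $(kx_1 - \epsilon, (k+1)x_2, x_3, \ldots, x_n)$ yields the contradiction. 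To salvage your construction you would need a gadget whose \emph{only} EF1 utility profile is $A$ while $A'$ remains realizable; as written, yours is not such a gadget. A smaller point: your route to (ii) through the $s \in \{0, 2k+1\}$ splits is more roundabout than necessary---once (i) is in hand, (ii) follows directly from the monotonicity of $f_n$ and the strict monotonicity of $q$ with no further profiles.
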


Note that if $f_n$ satisfies (c), then given any profile that admits an allocation where every agent receives positive utility, an allocation can be chosen by the welfarist rule with function $f_n$ if and only if it can be chosen by MNW, so the two rules are effectively equivalent.\footnote{For profiles that do not admit an allocation where every agent receives positive utility, MNW requires an additional tie-breaking specification in order to ensure EF1 \citep{CaragiannisKuMo19}.}
Hence, \Cref{thm:main} provides a characterization of MNW among all welfarist rules.

\begin{proof}[Proof of \Cref{thm:main}]
The implication (a) $\Rightarrow$ (b) is trivial. 
For the implication (c) $\Rightarrow$ (a), if $f_n$ satisfies (c), then given a profile that admits an allocation where every agent receives positive utility, every allocation that can be chosen by the welfarist rule with function $f_n$ is also an allocation that can be chosen by MNW, which is known to be EF1 \citep{CaragiannisKuMo19}; hence, $f_n$ also satisfies (a).
It therefore remains to prove the implication (b)~$\Rightarrow$~(c).
Assume that $f_n$ satisfies (b); we will show that both statements (i) and~(ii) of~(c) hold.

To prove (i), it suffices to show that $f_n$ satisfies \eqref{eq:k_equal} for all $x_1, \ldots, x_n > 0$, positive integers $k$, and $i \in N \setminus \{1\}$. 
Indeed, once this is shown, \Cref{lem:function} provides a continuous function $q: (0, \infty) \to [-\infty, \infty)$ satisfying $f_n(x_1, x_2, \ldots, x_n) = q(x_1 x_2 \cdots x_n)$ for all $x_1, \ldots, x_n > 0$.
Note that $q$ must be strictly increasing because $f_n$ is strictly increasing on $(0, \infty)^n$, and $-\infty$ cannot be in the range of $q$ since $q$ is strictly increasing and its domain is an open set in $\mathbb{R}$. 

To show \eqref{eq:k_equal}, suppose on the contrary that \eqref{eq:k_equal} is false for some $x_1, \ldots, x_n > 0$, positive integer $k$, and $i \in N \setminus \{1\}$; assume without loss of generality that $i = 2$, which means that
\begin{align*}
f_n((k+1)x_1, kx_2, x_3, \ldots, x_n) \neq f_n(kx_1, (k+1)x_2, x_3, \ldots, x_n).
\end{align*}
Suppose that
\begin{align*}
f_n((k+1)x_1, kx_2, x_3, \ldots, x_n) < f_n(kx_1, (k+1)x_2, x_3, \ldots, x_n);
\end{align*}
the case where the inequality goes in the opposite direction can be handled similarly.
By the continuity of $f_n$, there exists $\epsilon \in (0, x_1)$ such that 
\begin{align} \label{eq:k_eps}
    f_n((k+1)x_1 - \epsilon, kx_2, x_3 \ldots, x_n) < f_n(kx_1 - \epsilon, (k+1)x_2, x_3, \ldots, x_n).
\end{align}
Consider a profile with $m = kn + 1$ goods, where $G' := \{g_1, \ldots, g_{kn} \} = G\setminus\{g_{m}\}$, such that
\begin{itemize}
    \item for each $g \in G'$, we have $u_j(g) = x_j$ for $j \in \{1, 2\}$ and $u_j(g) = x_j/k$ for $j \in N \setminus \{1, 2\}$;
    \item $u_1(g_m) = x_1 - \epsilon$, and $u_j(g_m) = 0$ for $j \in N \setminus \{1\}$.
\end{itemize}
Clearly, this profile admits an allocation where every agent receives positive utility.
Let $A$ be an EF1 allocation chosen by the welfarist rule with function $f_n$ on this profile.
Regardless of whom $g_m$ is allocated to, each agent receives at most $k$ goods from $G'$ in $A$---otherwise, if some agent $j$ receives more than $k$ goods from $G'$, then some other agent receives fewer than $k$ goods from $G'$ by the pigeonhole principle and therefore envies $j$ by more than one good, meaning that $A$ is not EF1. 
Since $|G'| = kn$, every agent receives exactly $k$ goods from $G'$. 
Furthermore, $g_m$ must be allocated to agent $1$; otherwise, the allocation where $g_m$ is allocated to agent $1$ (and all other goods are allocated as in $A$) has a higher welfare than $A$, contradicting the fact that $A$ is chosen by the welfarist rule with function $f_n$. 
The welfare of $A$ must not be smaller than that of another allocation where agent $1$ receives $g_m$ along with $k - 1$ goods from $G'$, agent $2$ receives $k + 1$ goods from $G'$, and every other agent receives $k$ goods from $G'$ each. 
This means that
\begin{align*}
    f_n((k+1)x_1 - \epsilon, kx_2, x_3 \ldots, x_n) \geq f_n(kx_1 - \epsilon, (k+1)x_2, x_3, \ldots, x_n),
\end{align*}
contradicting \eqref{eq:k_eps}. This establishes (i).

It remains to prove (ii). 
Consider any $x_1, \ldots, x_n > 0$ and $y_1, \ldots, y_n \geq 0$ satisfying $\prod_{i=1}^n y_i = 0$.
Let $X := \prod_{i=1}^n x_i > 0$.
Without loss of generality, assume that $y_1 = \cdots = y_k = 0$ and $Y := \prod_{i=k+1}^n y_i > 0$ for some $k \in \{1, \ldots, n\}$ (if $k = n$, the empty product $\prod_{i=k+1}^n y_i$ is taken to be $1$). 
Define $z_1, \ldots, z_n$ by $z_i := (X/2Y)^{1/k}$ for all $i \in \{1, \ldots, k\}$ and $z_i := y_i$ for all $i \in \{ k+1, \ldots, n\}$. Then,
\begin{align*}
    f_n(y_1, \ldots, y_n) &\leq f_n(z_1, \ldots, z_n) \tag*{(since $f_n$ is non-decreasing)} \\
    &= q(z_1 \cdots z_k \cdot z_{k+1} \cdots z_n) \tag*{(by (i) and since all $z_i$'s are positive)}\\
    &= q((X/2Y) \cdot y_{k+1} \cdots y_n) \\
    &= q(X/2) \\
    &< q(X) \tag*{(since $q$ is strictly increasing)} \\
    &= q(x_1 \cdots x_n) \\
    &= f_n(x_1, \ldots, x_n), \tag*{(by (i) and since all $x_i$'s are positive)}
\end{align*}
completing the proof of the theorem.
\end{proof}

\subsection*{Acknowledgments}

This work was partially supported by the Singapore Ministry of Education under grant number MOE-T2EP20221-0001 and by an NUS Start-up Grant.
We thank Pakawut Jiradilok for helpful discussions.

\bibliographystyle{plainnat}
\bibliography{main}

\end{document}